\documentclass[letterpaper, 10 pt, journal, twoside]{IEEEtran}  %

 \newcommand{\IEEEwarning}{
  \onecolumn
  \noindent \textcopyright{} 2018 IEEE. Personal use of this material is permitted. Permission from IEEE must be obtained for all other uses, in any current or future media, including reprinting / republishing this material for advertising or promotional purposes, creating new collective works, for resale or redistribution to servers or lists, or reuse of any copyrighted component of this work in other works

  \bigskip \noindent Article submitted to \textit{IEEE Transactions on Control Systems Technology}.
  \twocolumn
  \newpage
}

\usepackage{graphics} 
\usepackage{epsfig} 
\graphicspath{ {images/} }
\usepackage{color,soul} 

\usepackage{times} 
\usepackage{cite}

\usepackage{amsmath} 
\usepackage{amssymb}

\usepackage{bbm}
\usepackage{amsthm}                               
\theoremstyle{remark}
\newtheorem{remark}{Remark}
\newtheorem{theorem}{Theorem}
\newtheorem{lemma}{Lemma}

\newtheorem{definition}{Definition}
\usepackage{enumerate}

\title{Quickest Detection of Intermittent Signals With Application to Vision Based Aircraft Detection
}

\author{Jasmin James,  Jason J. Ford and Timothy L. Molloy
\thanks{Parts of this paper have been presented at the 2017 Asian Control Conference \cite{James2017b}. The authors are with the School of Electrical Engineering and Computer Science, Queensland University of Technology, 2 George St, Brisbane QLD, 4000 Australia. {\tt\small jasmin.james@connect.qut.edu.au, j2.ford@qut.edu.au, t.molloy@qut.edu.au}. This work was supported by funding from the Australian Research Council Centre of Excellence CE140100016 in Robotic Vision.}%
}

\begin{document}
\IEEEwarning
\maketitle
\thispagestyle{empty}
\pagestyle{empty}
\begin{abstract}
In this paper we consider the problem of quickly detecting changes in an intermittent signal that can (repeatedly) switch between a normal and an anomalous state. We pose this intermittent signal detection problem as an optimal stopping problem and establish a quickest intermittent signal detection (ISD) rule with a threshold structure. We develop  bounds to characterise the performance of our ISD rule and establish a new filter for estimating its detection delays. Finally, we examine the performance of our ISD rule in both a simulation study and an important vision based aircraft detection application where the ISD rule demonstrates improvements in detection range and false alarm rates relative to the current state of the art aircraft detection techniques. 

\end{abstract}
\begin{IEEEkeywords}
Change Detection, Bayesian Quickest Change Detection, Sense and Avoid, Filtering
\end{IEEEkeywords}

\section{Introduction}
\IEEEPARstart{Q}{uickly} detecting the presence of an anomaly condition that can repeatedly appear and disappear is important in many applications such as fault detection \cite{Hwang}, cyber-security \cite{Tartakovsky2012}, intrusion or anomaly detection \cite{Tartakovsky2012}, and vision based aircraft detection \cite{Lai2013}. In vision based aircraft detection, this anomaly condition represents the potential emergence of an aircraft anywhere in an image which needs to be quickly detected for collision avoidance purposes.  We describe a signal containing this repeating anomaly condition as an intermittent signal. In this paper we aim to pose and solve this quickest intermittent signal detection (ISD) problem in a Bayesian setting that allows us to trade off average detection delay and false alarm probability.

In classic Bayesian quickest change detection, it is assumed that a permanent change in the statistics of a sequence of random variables occurs at some random unknown change time \cite{Tartakovsky2014}. The classic Bayesian  criterion seeks to minimise the average detection delay subject to a constraint placed on the probability of a false alarm. For this Bayesian formulation, Shiryaev established an optimal stopping rule which compares the posterior probability of a change to a threshold \cite{Tartakovsky2014}.   

Inspired by  classic  Bayesian quickest change detection, several alternative quickest detection problems have been posed in the last decade. Incipient fault detection seeks to identify slow drifts in system parameters \cite{Roychoudhury}; multi-cyclic detection seeks to identify a distant change in a stationary regime where detection procedures are reset after each false alarm \cite{Polunchenko2011}; quickest transient detection seeks to identify a change that occurs once for a period of time and then disappears\cite{Broder,Guepie2017}; and quickest detection under transient dynamics that seeks to identify a persistent change which does not happen instantaneously, but after a series of transient phases \cite{2017Zou}. In this paper, we consider a new quickest ISD problem where a change can repeatedly appear and disappear over time.     

Our quickest ISD problem is inspired by the important vision based aircraft detection application in which a small pixel-sized aircraft can visually emerge  anywhere in an image and can potentially transition in and out of view. Previous detection solutions have utilised \textit{ad hoc} maximum likelihood approaches \cite{Lai2011,Lai2013,molloy2017below}, and methods of non-Bayesian quickest change detection \cite{James2017a}. Here, we instead pose a quickest ISD problem and seek an optimal detection rule with the goal of quickly detecting when an aircraft emerges in an image sequence.  

The key contributions of this paper are:
\begin{enumerate}[i)]
\item Posing the quickest ISD problem and utilising an optimal stopping framework to establish an ISD rule with a threshold structure;
\item Introducing a new occupation time filter to estimate the detection delay of our ISD rule;
\item Experimentally demonstrating the improvements offered by our ISD rule in the vision based aircraft detection application.
\end{enumerate}

The rest of this paper is structured as follows. In Section \ref{sec:probForm} we pose our quickest ISD problem and associated cost criterion. In Section \ref{sec:ost} we establish an optimal ISD rule. In Section \ref{sec:perform} we provide performance characteristics for our ISD rule.
In Section \ref{sec:sim} we examine the performance of our ISD rule in a simulation study. In Section \ref{sec:app} we apply our ISD rule to vision based aircraft detection and examine its performance on an experimentally captured flight dataset. 
\section{Problem Formulation}\label{sec:probForm}
Let $X_k \in \{e_1, e_2\}$, for $k \geq 0$, be a sequence of random variables representing an intermittent signal that switches between a normal state $e_1$ and an anomalous state $e_2$ at (unknown) random time instances. Here $e_i \in \mathbb{R}^{2}$ are indicator vectors with $1$ as the $i$th element and zeros elsewhere.
For $k >0$, the intermittent signal $X_k$ is hidden within measurements $y_k \in \mathbb{R}^M$, that are an independent and identically distributed (i.i.d.) sequence of random variables with (marginal) probability density functions $f^1(y_k)$ when $X_k = e_1$ and $f^2(y_k)$ when $X_k = e_2$.

In this paper, we shall assume that the intermittent signal $X_k$ is a first-order time-homogeneous Markov chain.
Let $\rho$ be the probability of transitioning from the normal state behavior $e_1$ to the anomalous state behavior $e_2$, and let $a$ be the probability of self transition for $e_2$. 
Let us denote the transition probabilities at each time instant by $A^{i,j} \triangleq P(X_{k+1} = e_i | X_k = e_j)$ for $i,j \in \{1,2\}$ as
\begin{equation}
A = \begin{bmatrix}
1-\rho & 1- a \\
\rho & a
\end{bmatrix}.
\end{equation}

For $k\geq 0$, we can describe the intermittent signal (state process) $X_{k}$, as follows
\begin{eqnarray} 
X_{k+1} = AX_{k} +V_{k+1}\label{eqn:xprocess} 
\end{eqnarray}
where $V_{k+1} \in \mathbb{R}^{2\time1}$ is a martingale increment and the initial state $X_0$ has distribution $\hat{X}_0$. 
For the remainder of this paper we will define  $X_{[0,k]} \triangleq \{X_0, \dots ,X_k \}$ and $y_{[1,k]} \triangleq \{y_1, \dots ,y_k \}$ as shorthand for sequences of these random variables.

We now introduce a probability measure space used to pose our quickest ISD problem. Similar to \cite{ford2011} we consider the set $\Omega$ consisting of all infinite sequences $\omega \triangleq (X_{[0,k]}, \dots; y_{[1,k]}, \dots)$. Since $\Omega$ is separable and a complete metric space it can be endowed with a Borel $\sigma-$algebra $\mathcal{F} \triangleq \mathcal{B}_\Omega$. Using Kolmogorov's extension theorem we can now define a probability measure $\mathbb{P}$ on $(\Omega,\mathcal{F})$. We let $\mathbb{E}$ denote the expectation operation under the probability measure $\mathbb{P}$.

In this paper, our goal is to quickly detect when  $X_k$ is in $e_2$ by seeking to design a stopping rule $\tau \geq 0$ that minimises the following ISD cost criterion
\begin{equation}\label{eqn:cost}
J(\tau) = \mathbb{E} \Bigg[c \sum_{\ell=0}^{\tau -1} \langle X_\ell,e_2 \rangle + \langle X_\tau,e_1 \rangle  \Bigg],
\end{equation}
where $\langle . ,. \rangle$ denotes the inner product and $c$ is the penalty for the total amount of time spent in state $e_2$ before declaring an alert at $\tau$. This ISD cost criterion represents our desire to detect being in state $e_2$ as quickly as possible whilst avoiding false alarms (that is, avoid incorrectly declaring a stopping alert when the state is  $e_1$).

\section{Intermittent Signal Detection: Optimal Stopping time}\label{sec:ost}
In this section we first establish an equivalent representation of our ISD cost criterion in terms of the conditional mean estimates (CMEs) of the intermittent signal $X_k$. We then pose our quickest ISD problem as an optimal stopping problem and establish an optimal ISD rule that has a test statistic with a threshold structure. Finally we present the hidden Markov model (HMM) filter that can be used to efficiently calculate this test statistic.

\subsection{Equivalent Representation of the ISD Cost Criterion}
\begin{lemma}\label{lem:isdCost}
The ISD cost criterion
 \eqref{eqn:cost} can be expressed in terms of the CMEs of the intermittent signal $X_k$ as
\begin{equation} \label{eqn:Xhatcost}
J(\tau) = \mathbb{E} \Bigg[ c \sum_{\ell=0}^{\tau -1}    \hat{X}_\ell^2 + \hat{X}_\tau^1  \Bigg]
\end{equation}
where $\hat{X}_k^i \triangleq \mathbb{E}\big[X_k^i \big|y_{[1,k]} \big]$ denotes the CME of being in state $e_i$ given the measurements $y_{[1,k]}$. 

Moreover for the case where $e_2$ is an absorbing state then the transition probability $a=1$ and the  cost criterion \eqref{eqn:cost} reduces to the classic Bayesian quickest change detection criterion \cite{Shiryaev}
\begin{equation}
J^0(\tau) = \ c  \mathbb{E} \bigg[(\tau -\nu)^+\bigg] + \mathbb{P} (\tau < \nu), 
\end{equation}
where $\nu$ is the time of transition into the absorbing state $e_2$  and $(\tau -\nu)^+ \triangleq \max(0,\tau -\nu)$.
\end{lemma}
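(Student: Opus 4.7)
\emph{Plan.} The proof naturally splits along the two claims. I would first establish the conditional-mean representation \eqref{eqn:Xhatcost} and then specialise to the absorbing-state case.

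For the first claim, the plan is to reduce each term in \eqref{eqn:cost} to an expectation of a $\mathcal{Y}_k$-measurable quantity, where $\mathcal{Y}_k \triangleq \sigma(y_{[1,k]})$ is the measurement filtration with respect to which $\tau$ is a stopping time. The technical issue is that the sum $\sum_{\ell=0}^{\tau-1}$ has a random upper limit, so the tower property cannot be applied directly inside the sum. I would circumvent this by rewriting
\begin{equation*}
\sum_{\ell=0}^{\tau-1}\langle X_\ell,e_2\rangle = \sum_{\ell=0}^{\infty} \mathbbm{1}_{\{\ell < \tau\}}\, \langle X_\ell,e_2\rangle,
\end{equation*}
observing that $\{\ell < \tau\}\in\mathcal{Y}_\ell$ since $\tau$ is a $\{\mathcal{Y}_k\}$-stopping time, and then using Fubini/Tonelli (justified since the integrand is nonnegative and bounded by $1$) together with the tower property to pull the indicator outside the inner conditional expectation. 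This yields $\mathbb{E}[\mathbbm{1}_{\{\ell<\tau\}}\langle X_\ell,e_2\rangle] = \mathbb{E}[\mathbbm{1}_{\{\ell<\tau\}}\hat{X}_\ell^{2}]$, and resumming recovers $\mathbb{E}[\sum_{\ell=0}^{\tau-1}\hat{X}_\ell^{2}]$. For the terminal term I would condition on $\mathcal{Y}_\tau$ (the stopped $\sigma$-algebra), noting that $\langle X_\tau,e_1\rangle$ is a bounded random variable and $\mathbb{E}[\langle X_\tau,e_1\rangle\mid\mathcal{Y}_\tau]=\hat{X}_\tau^{1}$ by definition of the CME.

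For the second claim, I would compute the two terms of \eqref{eqn:cost} pathwise when $a=1$. Since $e_2$ is then absorbing, the hitting time $\nu \triangleq \inf\{k\geq 0 : X_k = e_2\}$ satisfies $\langle X_\ell,e_2\rangle = \mathbbm{1}_{\{\ell\geq\nu\}}$ and $\langle X_\ell,e_1\rangle = \mathbbm{1}_{\{\ell<\nu\}}$ for every $\ell$. A direct count gives
\begin{equation*}
\sum_{\ell=0}^{\tau-1}\langle X_\ell,e_2\rangle = \bigl|\{\ell : 0\leq\ell\leq\tau-1,\ \ell\geq\nu\}\bigr| = (\tau-\nu)^+,
\end{equation*}
handling the cases $\tau\leq\nu$ and $\tau>\nu$ separately, while the terminal term becomes $\langle X_\tau,e_1\rangle=\mathbbm{1}_{\{\tau<\nu\}}$. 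Taking expectations and using $\mathbb{E}[\mathbbm{1}_{\{\tau<\nu\}}]=\mathbb{P}(\tau<\nu)$ delivers the classic Shiryaev cost $J^0(\tau)$.

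The main obstacle is the stopping-time/random-sum bookkeeping in the first part; everything else is essentially pathwise algebra on $\{0,1\}$-valued indicators. I would also briefly note measurability of $\nu$ and boundedness of the integrands to justify the interchange of sum and expectation without additional integrability hypotheses.
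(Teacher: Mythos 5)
Your proposal is correct and follows essentially the same route as the paper: the tower rule for the conditional-mean representation and a pathwise indicator computation for the absorbing-state specialisation. In fact your treatment of the random upper limit---rewriting $\sum_{\ell=0}^{\tau-1}$ via indicators $\mathbbm{1}_{\{\ell<\tau\}}\in\mathcal{Y}_\ell$ before applying the tower property---is more careful than the paper's one-line invocation of the tower rule, which conditions on $y_{[1,k]}$ with a dangling index inside the random sum.
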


\begin{proof}
The ISD cost criterion \eqref{eqn:cost} can be expressed as
\[
\mathbb{E} \Bigg[c \sum_{\ell=0}^{\tau -1} \langle X_\ell,e_2 \rangle + \langle X_\tau,e_1 \rangle  \Bigg] = \mathbb{E} \Bigg[   c \sum_{\ell=0}^{\tau -1}   X_\ell^2 + X_\tau^1 \Bigg].
\]

Following \cite{krishnamurthy2016} and using the tower rule for conditional expectations \cite[pg. 331]{elliott1995} we obtain
\[
\begin{split}
\mathbb{E} \Bigg[ c \sum_{\ell=0}^{\tau -1}   X_\ell^2 + X_\tau^1 \Bigg] 
 =& \mathbb{E} \left[ \sum_{\ell=0}^{\tau -1} \mathbb{E} \left[ \left. c X_\ell^2
 + 
 X_\tau^1  \right| y_{[1,k]} \right] \right] \\
=& \mathbb{E} \Bigg[c \sum_{\ell=0}^{\tau -1}    \hat{X}_\ell^2 + \hat{X}_\tau^1  \Bigg].
\end{split}
\]
This proves the first lemma result. 

For the second result, if $e_2$ is an absorbing state, then $a=1$ and once $X_k=e_2$ it remains in $e_2$ and the following holds,
\begin{equation*}
\mathbb{E} \Bigg[c \sum_{\ell=0}^{\tau -1} \langle X_\ell,e_2 \rangle \Bigg] = c  \mathbb{E} \Big[(\tau -\nu)^+\Big] 
\end{equation*}
and
\begin{equation*}
 \mathbb{E} \Bigg[ \langle X_\tau,e_1 \rangle \Bigg] = \mathbb{P}(\tau < \nu).
\end{equation*}
This completes the proof.
\end{proof}

Lemma \ref{lem:isdCost} shows that our quickest ISD problem is a generalisation or relaxation of the Bayesian quickest detection problem, in the sense that, when $e_2$ becomes an absorbing state the ISD cost criterion \eqref{eqn:cost} reduces to the classic Bayesian quickest detection problem \cite{krishnamurthy2016}. 
Further, this  lemma  allows our proposed ISD cost criterion \eqref{eqn:cost} to be expressed in terms of the CMEs which will now be  used to establish an optimal ISD rule.




\subsection{Optimal ISD Rule}
In the following  theorem we show that an optimal solution for the ISD cost criterion is a stopping rule with a threshold structure. 
\begin{theorem}
For the ISD cost criterion \eqref{eqn:Xhatcost}, there is 
an optimal ISD rule with stopping time $\tau^*$, and threshold point $h_s \geq 0$ given by
\begin{equation}\label{eqn:stop} 
\tau^* = \inf\{k \geq 0 : \hat{X}^2_k \geq h_s\}. 
\end{equation}
\end{theorem}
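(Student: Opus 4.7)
The plan is to cast the minimisation of the CME-based cost \eqref{eqn:Xhatcost} as a standard partially observed optimal stopping problem in which the posterior $\hat{X}_k = (\hat{X}_k^1, \hat{X}_k^2)^\top$ plays the role of the information state. First, I would recall that $\hat{X}_k$ is a sufficient statistic for $y_{[1,k]}$ and that it evolves as a Markov process under the HMM (Wonham) filter update $\hat{X}_{k+1} = T(\hat{X}_k, y_{k+1})$, so that the per-stage cost $c \hat{X}_k^2$ and the terminal cost $\hat{X}_\tau^1 = 1 - \hat{X}_\tau^2$ are both measurable functions of the current information state. Since the simplex of posteriors is parametrised by the scalar $\pi \triangleq \hat{X}_k^2 \in [0,1]$, this reduces the problem to a scalar optimal stopping problem on $[0,1]$.

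Second, I would introduce the value function
\begin{equation*}
V(\pi) \triangleq \inf_{\tau \geq 0} \mathbb{E}^{\pi}\Bigg[ c \sum_{\ell=0}^{\tau-1} \hat{X}_\ell^2 + \hat{X}_\tau^1 \Bigg]
\end{equation*}
and write the associated Bellman / Snell-envelope equation
\begin{equation*}
V(\pi) = \min\Bigl\{\, 1-\pi,\; c\pi + \mathbb{E}\bigl[V(T(\pi,y))\,\big|\,\pi\bigr] \,\Bigr\},
\end{equation*}
with stopping region $\mathcal{S} = \{\pi : V(\pi) = 1-\pi\}$. Existence of $V$ as the unique bounded fixed point of this contraction (and of an optimal stopping time) follows from standard arguments because the costs are bounded on $[0,1]$; I would cite \cite{krishnamurthy2016} for these structural facts.

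Third, the key structural step is to show that the stopping region $\mathcal{S}$ is an upper interval $[h_s,1]$. My route is to establish that $V(\pi)$ is concave in $\pi$ on $[0,1]$. Concavity transfers through the filter update because $T(\pi,y)$ is a Möbius-type function of $\pi$ whose expectation under the mixture density $\pi f^2(y) + (1-\pi) f^1(y)$ preserves concavity; this is the standard POMDP concavity argument (see e.g.\ \cite{krishnamurthy2016}) and proceeds by induction on the value iterates $V_n$, using that the minimum of two concave maps is concave and that the continuation operator maps concave functions to concave functions. Given concavity, the function $g(\pi) \triangleq (1-\pi) - V(\pi)$ is convex with $g(0) \leq 0$ and $g(1) = -V(1) \geq 0$ (since stopping at $\pi = 1$ costs $0$ while continuing costs at least $c > 0$), so $\{\pi : g(\pi) \leq 0\}$ is an interval of the form $[h_s,1]$. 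This yields \eqref{eqn:stop}.

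The main obstacle I anticipate is the concavity-preservation step under the HMM filter update: one has to verify carefully that $\pi \mapsto \mathbb{E}[V(T(\pi,y)) \mid \pi]$ is concave when $V$ is, taking into account that the predictive density of $y$ depends linearly on $\pi$ and that $T(\pi,y)$ itself is a nonlinear function of $\pi$. Once concavity is established, the threshold conclusion and the existence of $h_s \in [0,1]$ follow immediately from convex analysis on the scalar interval $[0,1]$.
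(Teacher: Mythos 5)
Your proposal follows essentially the same route as the paper: a Bellman/value-function recursion on the posterior $\pi=\hat{X}_k^2$, concavity of the value function (the paper simply cites Theorem 12.2.1 of \cite{krishnamurthy2016} for the resulting convexity of the stopping set, while you sketch the induction), and verification that $\pi=1$ lies in the stopping set, giving $\mathcal{S}=[h_s,1]$. One small slip: since $V(\pi)\leq 1-\pi$ by definition of the minimum, $g(0)=1-V(0)\geq 0$ rather than $\leq 0$; fortunately only the convexity of $g$ together with $g\geq 0$ and $g(1)=-V(1)=0$ is needed to conclude that the stopping set is an interval of the form $[h_s,1]$.
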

\begin{proof}
In a slight abuse notation, we let $\mathbb{E} \left[  \cdot \left| \hat{X} \right] \right. $ denote the expectation operation corresponding to the probability measure where the initial state $X_0$ has distribution $\hat{X}$. We then define a cost criterion for different initial distributions as
\begin{equation}
\bar{J}(\tau,\hat{X}) \triangleq \mathbb{E} \left.  \left[ c \sum_{\ell=0}^{\tau -1}    \hat{X}_\ell^2 + \hat{X}_\tau^1 \right|   \hat{X}  \right].
\end{equation}
Noting that $J(\tau)=\bar{J}(\tau,\hat{X}_0)$, we can define a value function $V(\hat{X}_k) \triangleq \min_\tau \{ \bar{J}(\tau,\hat{X}_k)\}$ for our ISD cost criterion \eqref{eqn:Xhatcost} described by the recursion  \cite[pg. 156 and 258]{krishnamurthy2016}
\begin{equation}\label{eqn:val}
\begin{split}
V(\hat{X}_k) = \min \bigg\{c \hat{X}_k^2  
+ \mathbb{E} \left[ V\left(\hat{X}^+(\hat{X}_k,y)\right) \bigg| \hat{X}_k  \right] ,  \hat{X}_k^1  \bigg\},
\end{split}
\end{equation}
where $\hat{X}^+(\hat{X},y) = \langle \underline{1},B(y) A \hat{X} \rangle^{-1} B(y)A \hat{X} $, and $B(y) = \text{diag}(f^1(y), f^2(y))$.

Let $\mathcal{S} \triangleq \{ \hat{X}^2_k : V(\hat{X}_k) = \hat{X}_k^1\}$ denote the optimal stopping set that we are seeking. 	
Similar to the approach in \cite[sec. 12.2.2]{krishnamurthy2016}, 
noting that the value function $V(\hat{X}_k)$ is concave,  Theorem 12.2.1 of \cite{krishnamurthy2016} gives that the stopping set $\mathcal{S} \subset [0,1]$  is convex. This implies that  $\mathcal{S}$  is an interval of the form $[h_s,d]$ for some $0 \leq  h_s \leq d \leq 1$.

We now write our value function \eqref{eqn:val}   when $\hat{X}_k=e_2$  as 
\begin{equation*}
V(e_2)  = \min \left\{ c + \mathbb{E} \left[ V\left(\hat{X}^+(e_2,y)\right)\bigg|  e_2 \right],   0  \right\}.
\end{equation*}
Since $V\left(\hat{X}^+(e_2,y)\right)$ is positive  then
$V(e_2)  = 0$, which shows $\hat{X}_k^2=1$ belongs to the stopping set, thus $d=1$ and $\mathcal{S}$ is an interval of the form  $[h_s,1]$. We can express the optimal  stopping time as the first time that the stopping set $\mathcal{S}$ is reached, in the sense that
\begin{equation*}
\begin{split}
\tau^*  &= \inf\{k \geq 0  : \hat{X}^2_k \geq h_s\}.
\end{split}
\end{equation*}
This completes the proof. 
\end{proof} 
We note that when $a=1$, perhaps unsurprisingly, this ISD rule reduces to Shiryaev's Bayesian detection (SBD) rule \cite{Tartakovsky2012}.

While it is possible to write down a dynamic programming equation for the optimal threshold for the stopping time \eqref{eqn:stop}, in practice the threshold $h_s$ is selected to trade off the alert delay and the probability of a false alarm. 

\subsection{HMM Filter}
We now present the HMM filter for \eqref{eqn:xprocess} which can be used to efficiently calculate the CME $\hat{X}_k = \mathbb{E} \big[X_k \big|y_{[1,k]} \big]$, where the $2$nd element $\hat{X}_k^2$ can be used to implement  our ISD  rule \eqref{eqn:stop}.

At time $k>0$, we let $B(y_k) = \text{diag}(f^1(y_k), f^2(y_k))$  denote the diagonal matrix of output probability densities.  We can now calculate the CME  $\hat{X}_k $ at time $k$,  via the HMM filter   \cite{elliott1995}
\begin{equation}\label{eqn:CME}
\hat{X}_k= N_{k} B(y_k) A \hat{X}_{k-1},
\end{equation}
with initial condition $\hat{X}_{0}$ and where $N_{k}$ are scalar normalisation factors defined by 
\begin{equation}\label{eqn:Nk}
N_{k}^{-1} \triangleq \langle \underline{1},B(y_k) A \hat{X}_{k-1} \rangle.
\end{equation}

\section{Performance Bound and Delay Estimation}\label{sec:perform}
In this section we provide a bound for the probability of a false alarm (PFA) for our ISD rule. We then propose a new occupation time filter that can be used to estimate how long has been spent in the anomalous state $e_2$ before an alert is declared. Finally we establish some stability results for our proposed occupation time filter. 

\subsection{Bound on Probability of False Alarm}
For a given threshold $h_s$ used in our ISD rule \eqref{eqn:stop}, we define the PFA as the probability that the system is in the normal state $e_1$ when an alert is declared, that is $ \text{PFA}(\tau) \triangleq \mathbb{P} \left( X_\tau = e_1 \right)$. We can then bound the PFA as follows
\begin{equation*}\label{eqn:pfa}
\begin{split}
\text{PFA}(\tau)
&= \mathbb{P} \left( X_\tau = e_1 \right)\\
&= 1- \mathbb{P} \left( X_\tau = e_2 \right)\\
&= 1 - \mathbb{E} \left[\hat{X}^2_\tau \right] \\
&\leq 1-h_s.
\end{split}
\end{equation*}
In the second line we have followed \cite{molloy2016} and used the tower rule for conditional expectations \cite[pg. 331]{elliott1995}. In the third line we have used the fact that the CME $\hat{X}^2_\tau =  \mathbb{P} \big( X_\tau = e_2 \big| y_{[0,\tau]} \big)$. Finally we use the definition of the stopping time \eqref{eqn:stop}.

\subsection{Occupation Time Filter}
At time $k>0$, for $i=\{1,2\}$ we define the state occupation time ${\mathcal{O}}^i_{k} \in \mathbb{R}$  as
\begin{equation}
\begin{split}
{\mathcal{O}}^i_{k} & \triangleq \sum_{n=0}^{k-1} \langle X_n, e_i \rangle.
\end{split}
\end{equation}
We also define  the CME of the occupation time $\hat{\mathcal{O}}^{i}_{k} \triangleq \mathbb{E} \big[ \mathcal{O}_k^i \big| y_{[1,k]}\big]$, and  the CME of the occupation time ending in state $X_k$ as $\hat{\mathcal{O}}^{i,X}_{k} \triangleq \mathbb{E} \big[ \mathcal{O}_k^i X_k \big| y_{[1,k]}\big] \in \mathbb{R}^2$. 
Filters for these CMEs are established in the next lemma.
\begin{lemma}\label{lem:occ}
For $k> 0$, an occupation time filter $\hat{\mathcal{O}}^{i}_{k}$  for all $i \in \{ 1,2\}$ is given by
\begin{equation}\label{eqn:stateocc}
\begin{split}
\hat{\mathcal{O}}^{i,X}_k&=N_k B(y_k) A (\hat{\mathcal{O}}^{i,X}_{k-1} + \hat{X}^i_{k-1} e_i),\\
\hat{\mathcal{O}}^{i}_k &= \langle \underline{1}, \hat{\mathcal{O}}^{i,X}_k \rangle,
\end{split}
\end{equation}
with initial conditions  $\hat{\mathcal{O}}^{i,X}_0 = \underline{0}$ and $\hat{X}_k$ is given by \eqref{eqn:CME}.
\end{lemma}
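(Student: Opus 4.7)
My plan is to use the reference-measure / Bayes recursion technique from HMM filtering (as in \cite{elliott1995}) by first deriving a recursion for unnormalised joint expectations and then dividing through by the appropriate normalisation. The starting observation is that the occupation time admits the one-step update $\mathcal{O}^i_k = \mathcal{O}^i_{k-1} + X^i_{k-1}$ directly from the defining sum, so that right-multiplying by $X_k$ gives $\mathcal{O}^i_k X_k = \mathcal{O}^i_{k-1} X_k + X^i_{k-1} X_k$. The key algebraic identity I would lean on is that, because $X_{k-1}$ is a unit indicator vector in $\{e_1,e_2\}$, the product $X^i_{k-1} X_{k-1} = X^i_{k-1} e_i$; this is exactly what eventually produces the additive $\hat{X}^i_{k-1} e_i$ correction in the claimed recursion.

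Next I would introduce the unnormalised quantities $\alpha_k \in \mathbb{R}^2$ with components $\alpha^j_k \triangleq p(X_k = e_j, y_{[1,k]})$ and $\xi^{i,X}_k \triangleq \mathbb{E}\bigl[\mathcal{O}^i_k X_k \mathbbm{1}(y_{[1,k]})\bigr]$ in the joint-measure sense, so that $\hat{X}_k = \alpha_k / \langle \underline{1}, \alpha_k\rangle$ and $\hat{\mathcal{O}}^{i,X}_k = \xi^{i,X}_k / \langle \underline{1}, \alpha_k\rangle$. Summing over the hidden trajectory and using the conditional-independence structure of the HMM, any one-step-ahead term of the form $\mathbb{E}[g(X_{k-1}) X_k; y_{[1,k]}]$ collapses to $B(y_k)\,A\,\mathbb{E}[g(X_{k-1}) X_{k-1}; y_{[1,k-1]}]$ for any scalar function $g$. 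Applying this once with $g(X_{k-1}) = \mathcal{O}^i_{k-1}$ and once with $g(X_{k-1}) = X^i_{k-1}$, and invoking the indicator identity above to turn $X^i_{k-1} X_{k-1}$ into $X^i_{k-1} e_i$, I would obtain
\begin{equation*}
\xi^{i,X}_k = B(y_k)\,A\,\bigl(\xi^{i,X}_{k-1} + \alpha^i_{k-1} e_i\bigr).
\end{equation*}
Dividing by $\langle \underline{1}, \alpha_k\rangle$ and using that $\langle \underline{1}, \alpha_k\rangle / \langle \underline{1}, \alpha_{k-1}\rangle = \langle \underline{1}, B(y_k)A\hat{X}_{k-1}\rangle = N_k^{-1}$ via \eqref{eqn:Nk} delivers the claimed recursion $\hat{\mathcal{O}}^{i,X}_k = N_k B(y_k) A (\hat{\mathcal{O}}^{i,X}_{k-1} + \hat{X}^i_{k-1} e_i)$. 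The scalar recursion is then immediate: since $\langle \underline{1}, X_k\rangle = 1$ for any indicator vector, $\mathcal{O}^i_k = \mathcal{O}^i_k \langle \underline{1}, X_k\rangle$, so by linearity of conditional expectation $\hat{\mathcal{O}}^i_k = \langle \underline{1}, \hat{\mathcal{O}}^{i,X}_k\rangle$, and the initial condition $\hat{\mathcal{O}}^{i,X}_0 = \underline{0}$ follows because $\mathcal{O}^i_0$ is the empty sum.

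I expect the only real obstacle to be the bookkeeping around the cross term $X^i_{k-1} X_k$: the occupation time naturally couples to the state at time $k-1$, whereas the object we recurse on terminates with $X_k$, so the two time indices need to be reconciled. The collapse $X^i_{k-1} X_{k-1} = X^i_{k-1} e_i$, followed by one-step propagation through $B(y_k)A$, is what performs this reconciliation and produces the extra $\hat{X}^i_{k-1} e_i$ term; the remainder is routine tracking of the normalisation constants $N_k$.
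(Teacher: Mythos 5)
Your derivation is correct and reaches the recursion by a route that is equivalent to, but formally different from, the paper's. The paper follows the Elliott-style change-of-measure machinery: it introduces a reference measure $\bar{\mathbb{P}}$ under which the observations are i.i.d.\ with density $\psi$, defines unnormalised CMEs via the Radon--Nikodym derivative $\bar{\Lambda}_k$, handles the state update $X_k = AX_{k-1}+V_k$ through the martingale-increment property of $V_k$, and recovers the normalised filter by the conditional Bayes theorem. You instead run the classical forward-algorithm recursion directly under $\mathbb{P}$ on the unnormalised joint quantities $\alpha_k$ and $\xi^{i,X}_k$, using the Markov conditional-independence factorisation to collapse one step at a time. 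The two core algebraic moves are identical in both proofs: the one-step update $\mathcal{O}^i_k=\mathcal{O}^i_{k-1}+X^i_{k-1}$ combined with the indicator identity $X^i_{k-1}X_{k-1}=X^i_{k-1}e_i$ (which the paper invokes verbatim in its sixth line), and the observation that the unnormalised occupation and state estimates share the single normalisation factor $N_k$. What the paper's measure-change route buys is that the i.i.d.\ structure under $\bar{\mathbb{P}}$ makes the passage from conditioning on $y_{[1,k]}$ to conditioning on $y_{[1,k-1]}$ immediate and the argument extends to more general settings; what your route buys is that it avoids introducing $\bar{\mathbb{P}}$, $\bar{\Lambda}_k$, and the martingale-increment step altogether. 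One small imprecision: your collapse identity is stated ``for any scalar function $g$'' of $X_{k-1}$, but you apply it with $g=\mathcal{O}^i_{k-1}$, which is a function of the whole trajectory $X_{[0,k-2]}$ rather than of $X_{k-1}$ alone; the identity does hold for any $g$ measurable with respect to the past (this is exactly what the Markov factorisation delivers), so the step is valid, but it should be stated at that level of generality.
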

\begin{proof}
See appendix for proof.
\end{proof}

By setting $i=2$, this lemma lets us estimate how long has been spent in the anomalous state $e_2$ when our ISD rule declares an alert. We highlight that similar  occupation time filters are presented in \cite{elliott1995} for a delayed measurement model.

\subsection{Occupation Time Filter Stability}
We now present results characterising the stability of our proposed occupation time filter with respect to initial conditions. 

We first introduce some required concepts before we present our proof. Let $\hat{\mathcal{O}}^{i}_k(\hat{X}_0)$ and $\hat{X}_k(\hat{X}_0)$ denote the occupation time CME filter and the HMM filter respectively with initial condition $\hat{X}_0$. 
We now define an average error rate $\mathcal{R}^E_k(\hat{X}_0,\check{X}_0)$
between correct and misspecified initial conditions $\hat{X}_0$ and $\check{X}_0$, respectively, as
\begin{equation}
\mathcal{R}^E_k(\hat{X}_0,\check{X}_0) \triangleq \frac{ \left| \hat{\mathcal{O}}^{i,X}_k(\hat{X}_0) -\hat{\mathcal{O}}^{i,X}_k(\check{X}_0) \right|}{k}.
\end{equation}
Finally, a function $\phi$ is said to be of class $\mathcal{K}$ if it is strictly increasing, continuous and $\phi(0)=0$. A function $\beta$ is said to be of class $\mathcal{K}\mathcal{L}$ if for each $t \geq 0$,  $\beta(\cdot,t)$ is of class $\mathcal{K}$, and for each $s>0$, $\beta(s,\cdot)$ is decreasing to zero. 

\begin{definition}\label{def:beta}
(\textit{Asymptotic stability with respect to initial conditions}) The HMM filter $\hat{X}_k(\cdot)$  is said to be asymptotically stable with respect to initial conditions if there exists a $\beta(\cdot,\cdot) \in \mathcal{KL}$  such that for any $\hat{X}_0$ and $\check{X}_0$, 
\begin{equation}\label{eqn:err}
| \hat{X}_k(\hat{X}_0)-\hat{X}_k(\check{X}_0) | \leq \beta(| \hat{X}_k(\hat{X}_0)-\hat{X}_k(\check{X}_0) |,k). 
\end{equation}

\begin{lemma}
Assume that the HMM filter $\hat{X}_k(\cdot)$ is asymptotically stable with respect to initial conditions. Then,  the occupation time filter \eqref{eqn:err}  is  (average error rate) practically stable in the sense that for any given 
 $0<\delta \leq 1$, there is a $H$ such that for all $k>H$, and for any $\hat{X}_0$ and $\check{X}_0$, we have
\begin{equation}
\mathcal{R}^E_k(\hat{X}_0,\check{X}_0) \leq \delta + \frac{H}{k}.
\end{equation}
\end{lemma}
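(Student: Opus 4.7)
The plan is to exploit two facts. First, the occupation-time CME $\hat{\mathcal{O}}^{i,X}_k$ can be rewritten via linearity of conditional expectation as a sum (indexed by time $n$) of smoothed joint-state probabilities. Second, each such smoothed joint probability depends on the initial condition $\hat{X}_0$ only through the filter value $\hat{X}_n(\hat{X}_0)$ at that time, whose sensitivity to $\hat{X}_0$ is controlled by the assumed $\mathcal{KL}$-bound. Combining the two, the cumulative error between the two occupation-time filters splits into a constant ``transient'' part and a uniformly small ``steady-state'' part, so that dividing by $k$ yields precisely the $\delta + H/k$ form.

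Concretely, I would first write
\[
\hat{\mathcal{O}}^{i,X}_k(\hat{X}_0) = \sum_{n=0}^{k-1} \pi_n(\hat{X}_0), \quad \pi_n(\hat{X}_0) \triangleq \mathbb{E}\bigl[\langle X_n, e_i \rangle X_k \,\big|\, y_{[1,k]}, \hat{X}_0\bigr] \in [0,1]^2,
\]
whose $j$-th component is the smoothed joint probability $\mathbb{P}(X_n = e_i, X_k = e_j \mid y_{[1,k]}, \hat{X}_0)$. A standard forward--backward factorisation then expresses $\pi_n(\hat{X}_0)$ as a rational function of the filter value $\hat{X}_n(\hat{X}_0)$ whose coefficients (the backward variables) depend only on $y_{[n+1,k]}$ and the HMM parameters $(A,B)$, not on $\hat{X}_0$. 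Under the regularity one already needs for the HMM filter to be stable (uniform positivity of $A$ and comparable likelihoods $f^1,f^2$), this rational function is Lipschitz in its filter argument with some constant $L$, giving
\[
\bigl| \pi_n(\hat{X}_0) - \pi_n(\check{X}_0) \bigr| \;\leq\; L\, \bigl| \hat{X}_n(\hat{X}_0) - \hat{X}_n(\check{X}_0) \bigr| \;\leq\; L\, \beta\bigl(|\hat{X}_0 - \check{X}_0|,\, n\bigr).
\]

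Next, given $\delta \in (0,1]$, the $\mathcal{KL}$ property of $\beta$ yields an $H$ (depending only on $\delta$, $L$, and the diameter of the simplex) such that $L\,\beta(\cdot,n) \leq \delta$ for all $n \geq H$. For the earlier indices $n < H$ the trivial bound $|\pi_n(\hat{X}_0) - \pi_n(\check{X}_0)| \leq C_0$ (with $C_0$ an absolute constant, say $2$, since each entry of $\pi_n$ lies in $[0,1]$) suffices. Summing and dividing by $k$ gives
\[
\mathcal{R}^E_k(\hat{X}_0, \check{X}_0) \;\leq\; \frac{C_0 H}{k} + \frac{\delta(k-H)}{k} \;\leq\; \delta + \frac{C_0 H}{k},
\]
from which the claim follows after absorbing $C_0$ into a redefinition of $H$.

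The main obstacle I expect will be securing the uniform Lipschitz constant $L$ in the forward--backward representation: the natural bound involves the ratio of backward likelihoods $\gamma_n^j / \gamma_n^{i}$, which is measurement-dependent and can in principle blow up on atypical realisations. I would therefore need either a regularity assumption bounding $f^2/f^1$ together with the positivity of $A$, or to restrict attention to a (probability-one) event on which the likelihoods remain comparable. A fallback would be to iterate the recursion \eqref{eqn:stateocc} directly, adding and subtracting $M_k(\check{X}_0)\bigl[\hat{\mathcal{O}}^{i,X}_{k-1}(\hat{X}_0) + \hat{X}^i_{k-1}(\hat{X}_0) e_i\bigr]$ (with $M_k = N_k B(y_k) A$) to peel off a contraction of $\Delta_{k-1}$ plus a vanishing injection term driven by the HMM-filter error; however the normalisation mismatch $M_k(\hat{X}_0) - M_k(\check{X}_0)$ multiplies the growing $\hat{\mathcal{O}}^{i,X}_{k-1}$, so that route is technically heavier than the smoother decomposition above unless $\beta$ decays strictly faster than $1/k$.
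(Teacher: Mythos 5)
Your overall architecture matches the paper's: both decompose $\hat{\mathcal{O}}^{i,X}_k$ into a sum of $k$ time-indexed terms, bound all but $H$ of them by $\delta$ and the remaining $H$ by a constant, and divide by $k$ to get $\delta + H/k$. The difference is where the per-term decay comes from. The paper unrolls the recursion \eqref{eqn:stateocc} so that the $j$-th term is the injected vector $\hat{X}^i_j(\cdot)e_i$ propagated forward through the normalised filter dynamics from time $j$ to time $k$; the assumed $\mathcal{KL}$-stability is applied directly to that forward propagation, giving a bound $\beta(\cdot,k-j)$ that is below $\delta$ once $k-j\geq H$, so the ``bad'' terms are the $H$ most recent ones. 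You instead read each term as a smoothed joint probability $\pi_n$, and bound its sensitivity to the initial condition by a Lipschitz constant $L$ of the forward--backward map times the filter forgetting $\beta(\cdot,n)$ over $[0,n]$, so your decay is in $n$ and your bad terms are the $H$ earliest ones. Either accounting yields $\delta(k-H)+O(H)$ and hence the claim.

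The obstacle you flag is, however, a genuine gap rather than a technicality, and it is precisely what the paper's arrangement sidesteps. Your step $|\pi_n(\hat{X}_0)-\pi_n(\check{X}_0)|\leq L\,|\hat{X}_n(\hat{X}_0)-\hat{X}_n(\check{X}_0)|$ needs $L$ to be uniform over measurement realisations, because the lemma's $H$ must be deterministic (``for all $k>H$ and any $\hat{X}_0,\check{X}_0$''). For the two-state chain the normalised smoothing map has derivative of order $\max\bigl(\gamma_n^1/\gamma_n^2,\ \gamma_n^2/\gamma_n^1\bigr)$ in the backward likelihoods, which is unbounded for Gaussian observation densities --- the very setting of the paper's simulations --- so no uniform $L$ exists without an extra boundedness assumption on $f^2/f^1$ that the lemma does not make. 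The repair is to move the forgetting to the forward direction as the paper does: the stability hypothesis already controls the normalised forward propagation over the last $k-j$ steps, so no separate Lipschitz constant is required. (Your fallback of peeling the recursion directly runs into the normalisation-mismatch problem you yourself identify, so the forward-unrolled sum is the right vehicle.)
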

\begin{proof}
See appendix for proof.
\end{proof}

\end{definition}
\begin{remark} \label{rem:shue}
 There are standard mild conditions under which the HMM filter $\hat{X}_k(\cdot)$ is asymptotically stable in the sense of Definition \ref{def:beta}, see \cite{Shue1998} for more information. 
\end{remark}
\section{Simulation Results}\label{sec:sim}
In this section we examine the performance of our ISD rule \eqref{eqn:stop} and occupation time filter \eqref{eqn:stateocc} in simulation. 

\subsection{Illustrative Example of the ISD and SBD Optimal Stopping Rules} 
We simulated a hand-crafted intermittent signal $X_k$ which switched between normal $e_1$ and anomalous $e_2$ states. The measurements $y_k$ are  i.i.d. with marginal probability  densities $f^1(y) = \psi(y-1)$ when $X_k =e_1$ and $f^2(y) = \psi(y-2)$ when $X_k = e_2$, where $\psi (\cdot)$ is zero-mean Gaussian probability density function with variance $\sigma^2$. 
The ISD rule \eqref{eqn:stop} with $\rho=0.01$ and $a=0.99$, and the SBD rule \eqref{eqn:stop} with $\rho=0.01$ and $a=1$ were both applied to the simulated observation data with $\sigma^2 =5$.

From top to bottom, Figure \ref{fig:simcomp} gives an illustrative example of the intermittent signal $X_k$, the measurements $y_k$, and a comparison of the ISD and the SBD test statistics against a  threshold of $h_s = 0.7$. In this example the underlying intermittent signal switches into the anomalous state at $k=600$. Our ISD rule (correctly) declares an alert at $k=617$ with no false alarms.  The SBD test statistic also exceeds the threshold at $k=617$, however the SBD rule declares an alert at $k=223$ corresponding to a false alarm. 

\begin{figure}
\begin{center}
\includegraphics[scale=0.63]{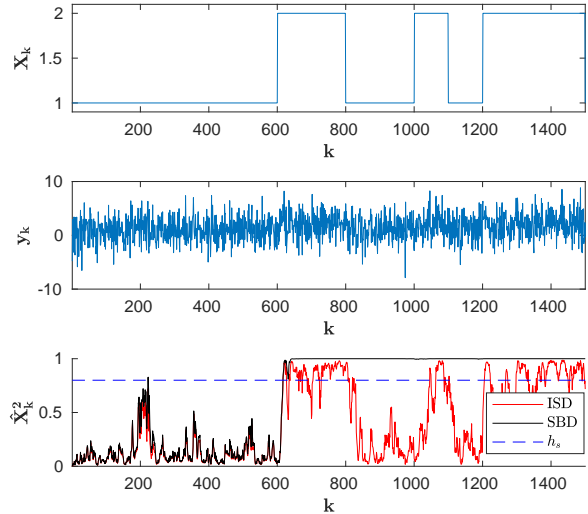}
  \caption{ From top to bottom: the intermittent signal $X_k$, the measurements $y_k$ and a comparison of the ISD and the SBD optimal stopping rules for an arbitrarily  selected threshold of $h_s = 0.7$. Our ISD rule alerts at $k=617$, while the SBD rule  alerts at $k=223$ (corresponding to a false alarm).  }
\label{fig:simcomp}
\end{center}
\end{figure}

\subsection{Performance of Stopping rules in Monte Carlo Study}

\begin{figure}
\begin{center}
\includegraphics[scale=0.6]{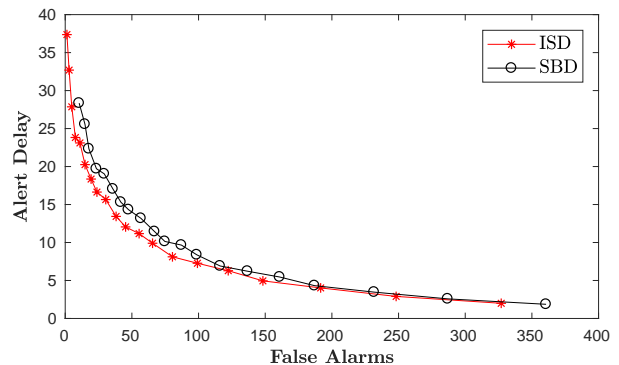}
 \caption{The mean alert delay versus the mean number of false alarms of the ISD and the SBD optimal stopping rules for range of different thresholds $h_s$.  The maximum standard error of the delays is $0.024$. Our ISD rule appears to outperform the SBD rule over a range of different thresholds.}
\label{fig:diff}
\end{center}
\end{figure}

We simulated an intermittent signal $X_k$, the measurements $y_k$, and considered the ISD and SBD rules as described in the previous simulation study.
We compared the performance of the ISD and SBD rules over a range of different thresholds $h_s$ to examine the trade off between the false alarms and the alert delay (AD). For a set threshold we applied both rules for $1000$ Monte Carlo cases and determined the mean AD and mean number of false alarms.  Figure \ref{fig:diff} shows a comparison of the two rules. Perhaps unsurprisingly, the ISD rule appears to outperform the SBD rule over a range of different ADs and false alarms. The maximum standard error of the delays shown in the figure is $0.024$. Additionally, our ISD rule has a theoretical optimality guarantee for this class of intermittent signals while the SBD rule does not.

\subsection{Performance of CME filter for state occupation time}
In our final study we simulated a intermittent signal  $X_k$ 
with transition probabilities $\rho = 0.001$, $a = 0.999$. The measurements $y_k$ were generated as above, except we tested a range of different variances $\sigma^2$. We bounded our PFA with a threshold of $h_s = 0.7$  and applied our ISD rule \eqref{eqn:stop} as above and our occupation time CME filter $\hat{\mathcal{O}}_k^2$ \eqref{eqn:stateocc}	for $1000$ Monte Carlo cases to determine the mean AD.  Figure \ref{fig:occ} shows the mean AD estimated by our proposed occupation time CME filter $\hat{\mathcal{O}}_k^2$ \eqref{eqn:stateocc}	 compared to the mean AD achieved by our proposed ISD rule for a range of different variances $\sigma^2$. The maximum standard error of the delays shown in the figure is $0.28$. Figure \ref{fig:occ} illustrates that the occupation time CME filter $\hat{\mathcal{O}}_k^2$ \eqref{eqn:stateocc}	provides an under estimate for the mean AD which improves as the variance $\sigma^2$ decreases. 

\begin{figure}
\begin{center}
\includegraphics[scale=0.7]{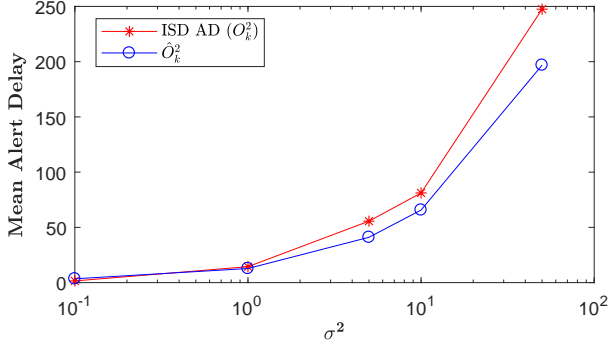}
  \caption{The mean estimated delay time from our proposed occupation time CME filter $\hat{\mathcal{O}}^2_k$ compared to the mean AD achieved by our proposed ISD approach for a range of different variances $\sigma^2$.  The maximum standard error of the delays is $0.28$. Our occupation time CME filter provides an under estimate for the mean AD which improves as the variance $\sigma^2$ decreases.}
\label{fig:occ}
\end{center}
\end{figure}

\section{Application: Vision Based Aircraft Detection}\label{sec:app}
In this section we examine the performance of our ISD rule \eqref{eqn:stop} in the important vision based aircraft detection application. We aim to quickly detect, with low false alarms, an aircraft on a near collision course after it visually emerges. 

Due to the low signal to noise ratio measurements in this application, achieving an effective representation of the dynamics of aircraft emergence is  important. 
Previous work utilising \textit{ad hoc} maximum likelihood detection approaches observed the need for ergodic representations of the aircraft emergence dynamics, which motivated the (physically unrealistic) image boundary transition wrapping used in current approaches \cite{Lai2011,Lai2013}.
It is not clear how classic Bayesian quickest detection might be used in this application due to its absorbing state (i.e. non ergodic representations).


We cast the vision based aircraft detection problem as a quickest ISD problem and then compare the performance of the resulting ISD rule to a baseline detection system on the basis of experimentally captured in-flight image sequences. The aircraft sequences are between two fixed wing aircraft; the data collection aircraft was a ScanEagle UAV  and the other aircraft was a Cessna 172 (see \cite{Bratanov2017} for details of flight experiments). 

\subsection{HMM Aircraft Dynamics}
Consider a single aircraft which we aim to detect at distances where it is (potentially) visually apparent at a single pixel in an image frame. 
For $k \geq 0$,  we introduce a new Markov chain with a state for each of the aircraft's possible $N$ pixel locations.  
We introduce an extra state to denote when the aircraft is not visually apparent (NVA)  anywhere in the image frame.  Let us denote this Markov chain as $Z_k \in \{\mathcal{E}_1, \mathcal{E}_2, \dots, \mathcal{E}_N, \mathcal{E}_{N+1}\}$
where for $1 \leq i \leq N$, $\mathcal{E}_i$ corresponds to the aircraft being visually apparent at the $i$th pixel and $\mathcal{E}_{N+1}$ corresponds to the aircraft not being visually apparent (i.e, in the NVA state). 

Between consecutive frames the aircraft can transition between different Markov states. The likelihood of state transitions depends on expected aircraft motion and are modelled by the HMM transition probabilities $\mathcal{A}^{i,j}$ for $1 \leq i,j \leq N+1$. Within the image, the possible aircraft inter-frame motion can be represented by a transition patch (see \cite{Lai2013} for detailed explanation of patches). State transitions that would cross the image boundary will transition to the NVA state. An aircraft located in the NVA state is able to transition to any pixel in the image (that is, the aircraft can visually emerge anywhere as it approaches from a distance). 

\subsection{Aircraft Observations}
At each time $k>0$ we obtain a noise corrupted morphologically processed greyscale images of an aircraft $y_k$, as in \cite{Lai2011,Lai2013}. 
We denote the measurement of the $i$th pixel at time $k$ as $y^i_k$.
Following \cite{molloy2014} we let $p(y^i_k)$ denote the probability density of pixels occupied by an aircraft and $q(y^i_k)$ denote the probability density of pixels not occupied by an aircraft. 
That is, for $1 \leq i \leq N$ 
\begin{equation*}
y^i_k \sim
\begin{cases} 
      p(y^i_k) & \text{for }  Z_k=\mathcal{E}_i \\
      q(y^i_k) &  \text{for }  Z_k \neq \mathcal{E}_i. \\
\end{cases}
\end{equation*}
Recalling that we consider single pixel sized aircraft, we assume that these densities are statistically independent in the sense that $p(y_k^m,y_k^n) = p(y_k^m)p(y_k^n)$ and $q(y_k^m,y_k^n) = q(y_k^m)q(y_k^n)$ for $1 \leq m,n,\leq N+1, m\neq n$. Hence we have  that the probability of receiving an image $y_k$ when the aircraft is in the $i$th pixel is
\begin{eqnarray*}
b^i(y_k) &\triangleq& p(y_k^i) \prod^{N}_{\substack{j=1 \\ j\neq i}}q(y^j_k) \nonumber \\
 &=& \frac{p(y_k^i)}{q(y_k^i)} \prod^{N}_{j=1} q(y^j_k).
\end{eqnarray*}
Noting that $\prod^{N}_{j=1} q(y^j_k)$ is a common factor for all $1 \leq i \leq N$, we can instead consider the unnormalised  $\bar{b}^i(y_k) = \frac{p(y_k^i)}{q(y_k^i)}$. However, as $p(y_k^i)$ and $q(y_k^i)$ are not known \textit{a priori}, we follow \cite{Lai2013} and use the approximation $\bar{b}^i(y_k) = y^i_k+1$ for $1\leq i \leq N$. When the aircraft is in the NVA state, i.e. for $i=N+1$, then the whole image would consist of noise with no aircraft $b^{N+1}(y_k) = \prod^{N}_{j=1} q(y^j_k)$, giving the unnormalised $\bar{b}^{N+1}(y_k) = 1$. 
Our diagonal matrix of (unnormalised) output densities is then given by
 \begin{equation*}
\mathcal{B}^{ij}(y_k) = 
 \begin{cases}
 \bar{b}^i(y_k) & \text{for } i=j\\
 0 & \text{for } i \neq j\\
 \end{cases}
 \end{equation*}
for $1 \leq i,j \leq N+1$. 

\subsection{Applying our ISD Optimal Stopping Rule}
Recall that our goal is to  quickly detect  when an aircraft emerges in an image sequence, specifically,  when the aircraft leaves the NVA state and appears in any of the pixels in the image frame.
Hence, we seek to design a rule $\tau \geq 0$ for stopping that minimises the following cost criterion
\begin{equation}\label{eqn:newcost}
\mathcal{J}(\tau) = \mathbb{E} \Bigg[c \sum_{i=1}^N \sum_{\ell=0}^{\tau -1} \langle Z_\ell,\mathcal{E}_i \rangle + \langle Z_\tau,\mathcal{E}_{N+1} \rangle  \Bigg],
\end{equation}
which represents our desire to detect when the aircraft appears at any pixel as quickly as possible whilst avoiding false alarms. 

Consider two possible detection states: a no aircraft (normal) state $e_1$ and an aircraft (anomalous) state $e_2$. We can construct these states by  equating $ \langle X_k, e_{1} \rangle  = \langle Z_k, \mathcal{E}_{N+1} \rangle$ and $  \langle X_k , e_2 \rangle =\sum_{i=1}^N \langle Z_k, \mathcal{E}_i \rangle$ through aggregating our first $N$ image states (see \cite{Sonin1999} for information on state aggregation). The cost criterion \eqref{eqn:newcost} now reduces to our ISD cost criterion \eqref{eqn:cost} allowing the use of our ISD rule  \eqref{eqn:stop}
\begin{equation*}
\tau^* = \inf\{k \geq 0 : \zeta_k \geq h_c\}, 
\end{equation*}
where $\zeta_k = 1- \hat{Z}^{N+1}_k$  and $h_c$ is a threshold chosen to trade off the alert delay and probability of false alarm. The CME $\hat{Z}^{N+1}_k$ of the NVA state  can be efficiently calculated via the HMM filter \cite{elliott1995}
\begin{equation*}
\hat{Z}_k = \mathcal{N}_k\mathcal{B}(y_k) \mathcal{A}\hat{Z}_{k-1},
\end{equation*}
where 
\begin{equation*}
\mathcal{N}^{-1}_k  = \langle \underline{1}, \mathcal{B}(y_k) \mathcal{A}\hat{Z}_{k-1}  \rangle.	
\end{equation*}
 


 
\subsection{Performance Study}
In this section we evaluate our proposed ISD rule in an application study on an experimentally captured flight dataset.  We will compare the performance of our proposed rule to a baseline system developed in \cite{Lai2013}. We will denote this baseline rule smoothed normalisation thresholding (SNT-4).  We highlight that the baseline SNT-4 rule employs a filter bank (4 HMM filters) while our proposed ISD rule just uses a single filter 
(with $\mathcal{A}$ having the patch from \cite{Lai2013} that allows motion in the up direction for transitions within the image and an equal probability of  $0.1/N$ for transitions from the NVA state to each pixel in the image).

Detection performance will be evaluated on the 15 head-on near collision course encounters reported in \cite{Bratanov2017}  where we have maintained their numbering convention for comparison purposes.



\subsubsection{Detection Range Study}
Note that detection range and false alarm performance varies with the choice of the threshold parameters. 
Here, we will identify the lowest thresholds $h_c$ for each algorithm that achieve zero false alarms (ZFAs) for this dataset. We will compare the two rule on the basis of their resulting ZFA detection ranges (the ability to achieve low false alarm rates is consistent with findings in \cite{Lai2011, Lai2013}). In practice, detection thresholds could be adaptively selected on the basis of scene difficulty such as proposed in  \cite{molly2017}.

The resulting ZFA detection ranges are presented in Figure \ref{fig:applicationComp}.  The mean detection distance and standard error was $2227$m and $52$m for the ISD rule and  $2076$m and $42$m for the baseline SNT-4 rule. Our ISD rule improved detection ranges relative to the baseline SNT-4 rule by a mean distance of $151$m. A paired-sample t-test shows at a significance level of 0.05  that our proposed ISD rule performs at least $3.6\%$ ($75$m) better than the baseline SNT-4 rule. 

\begin{figure}[t!]
\begin{center}
\includegraphics[scale=0.55]{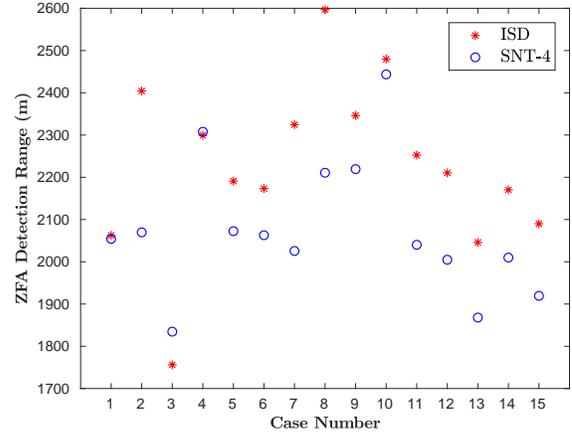}
\caption{A comparison of the proposed ISD rule compared to the baseline  SNT-4 from \cite{Lai2013} for all 15 cases presented in \cite{Bratanov2017}. The mean detection distance and standard error was $2227$m and $52$m for the ISD rule and  $2076$m and $42$m for the SNT-4 rule.  } 
\label{fig:applicationComp}
\end{center}
\end{figure}

\subsubsection{System Operating Characteristics Analysis}
\begin{figure}[t!]
\begin{center}
\includegraphics[scale=0.53]{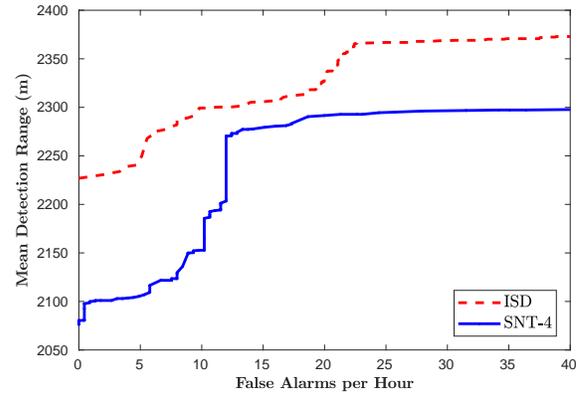}
\caption{The mean detection ranges and mean false alarm rates for our proposed ISD rule compared to the baseline SNT-4 rule. The maximum standard error of the mean detection ranges is $72$m for our proposed ISD rule and  $108$m for the baseline SNT-4 rule.} 
\label{fig:socCuve}
\end{center}
\end{figure}

We next composed system operating characteristic (SOC) curves for our proposed ISD rule and baseline SNT-4 rule (SOC curves examine detection range and false alarm performance for different thresholds). Figure \ref{fig:socCuve} presents the mean detection range  for all 15 cases versus the mean false alarms per hour. The maximum standard error of the mean detection ranges is $72$m for our proposed ISD rule and  $108$m for the baseline SNT-4 rule. Figure \ref{fig:socCuve} illustrates longer detection ranges for our proposed system whilst maintaining lower false alarm rates across all tested thresholds. 

\subsection{Advanced Detection Rule Study}
We compare the performance of our proposed ISD rule with 4 other detection rules.
We modified the baseline SNT-4 rule to have individual thresholds for each of the filters in the bank (we denote this SNT-4I).
We also considered a 4 filter bank version of the  normalisation change detection (NCD) approach \cite{James2017a}  with individual thresholds (we denote this NCD-4I). 
Finally, we implemented a 4 filter bank version of our proposed ISD rule with individual thresholds (we denote this ISD-4I). 

Table \ref{tbl:RQ} presents the mean detection ranges and standard errors for the five compared detection rules. We highlight that our proposed ISD and ISD-4I  rules illustrate longer mean detection ranges than all other rules. Additionally our proposed ISD rules have the benefit of not requiring an estimate of aircraft and non-aircraft densities (this is required in the NCD-4I rule).
\begin{table}
\caption{The mean ZFA detection ranges and standard errors for all 15 cases.  } 
\begin{center}
\begin{tabular}{ |c|c| } 
\hline
Detection rule & Mean ZFA detection range (m) \\
 \hline
SNT-4 & $2076 \pm 42$  \\ 
SNT-4I & $2215 \pm 62$ \\
NCD-4I & $2216 \pm 62$ \\
ISD & $2227 \pm 52$  \\ 
ISD- 4I & $2376 \pm 82$ \\
 \hline
\end{tabular}
\label{tbl:RQ}
\end{center}
\end{table}

\section{Conclusion}
In this paper we examined the problem of quickly detecting changes  in an intermittent signal. We first posed the quickest ISD problem and established an optimal ISD rule.  We then developed techniques and bounds for characterising the performance of our ISD rule. 
Finally, we investigated the performance of our ISD rule in both a simulation study and in the important vision based aircraft detection application. 
We were able to show that our ISD rule improves detection performance by at least $3.6\%$, at a significance level of $0.05$,  relative to the current state of the art vision based aircraft detection technique.  



\appendix
\subsection*{Proof of Lemma 2}
We first introduce some measure change concepts, see \cite{elliott1995} for more details. 
Let us define a new probability measure $\bar{\mathbb{P}}$ on $(\Omega,\mathcal{F})$ under which $y_k$ becomes a sequence of i.i.d. random variables with probability density function $\psi(\cdot)$. Let $\bar{\mathbb{E}}[\cdot]$ denote the expectation operation defined by $\bar{\mathbb{P}}$. Let $\mathcal{F}_k$ denote the complete filtration generated by $(X_{[0,k]},y_{[1,k]})$. We can define a measure change between $\mathbb{P}$ and $\bar{\mathbb{P}}$ via the Radon-Nikodym derivative 
$\left. \frac{d\mathbb{P}}{d\bar{\mathbb{P}}} \right|_{\mathcal{F}_k}= \bar{\Lambda}_k$ 
as follows (see  \cite{elliott1995, Xie}) 
\begin{equation*}
 \bar{\Lambda}_k = \prod^k_{\ell=1} \bar{\lambda}_\ell, \quad \bar{\lambda}_\ell = \frac{\begin{bmatrix}
 f^1(y_\ell)  & f^2(y_\ell)
 \end{bmatrix}X_\ell }{\psi(y_\ell)}.
\end{equation*}
We can now introduce an unnormalised CME of the state, $\bar{X}_k \triangleq \bar{\mathbb{E}} \Big[\bar{\Lambda}_k X_k \Big| y_{[1,k]} \Big]$
which is related to our normalised estimate via the conditional Bayes theorem \cite[Thm 2.3.2]{elliott1995} as follows
\begin{equation}\label{eqn:xbayes}
\hat{X}_k = \frac{\bar{X}_k}{\bar{\mathbb{E}}\big[ \bar{\Lambda}_k \big| y_{[1,k]}\big]}.
\end{equation}

Similarly, we can introduce an unnormalised CME of the occupation time ending in state $X_k$, $X_k\mathcal{O}^{i}_k$ as
$\bar{\mathcal{O}}^{i,X}_k = \bar{\mathbb{E}} \Big[ \bar{\Lambda}_k X_k {\mathcal{O}}^{i}_{k}  \Big| y_{[1,k]} \Big]$
which is related to our normalised estimate via the conditional Bayes theorem as
\begin{equation}\label{eqn:obayes}
\hat{\mathcal{O}}^{i,X}_k= \frac{\bar{\mathcal{O}}^{i,X}_k }{\bar{\mathbb{E}}\big[ \bar{\Lambda}_k\big| y_{[1,k]}\big]}.
\end{equation}

Before we present our main argument we note that martingale increment properties of $V_k$ gives that $\bar{\mathbb{E}} \left[ \bar{\Lambda}_{k-1} V_k \langle X_{\ell-1},e_i \rangle \left| X_{[0,k-1]},y_{[1,k]} \right] \right. =0$ $\bar{\mathbb{P}} \; a.s.$ for any $\ell \le k$ and any $i$. Then using this result within the tower rule for conditional expectations \cite[pg. 331]{elliott1995} shows that, for any $\ell \le k$ and any $i$,
\begin{equation} \label{eqn:vzero}
\bar{\mathbb{E}} \Big[\bar{\Lambda}_{k-1}  V_k  \langle X_{\ell-1},e_i \rangle \Big| y_{[1,k]} \Big] = 0 \;  \quad  \bar{\mathbb{P}} \; a.s..
\end{equation}

Simple algebra also gives
\begin{equation} \label{eqn:lstep}
\bar{\lambda}_k X_k =\frac{B(y_k)}{\psi(y_k)}  X_k.
\end{equation}


For $i \in \{1,2\}$, we can rewrite the unnormalised estimate $\bar{\mathbb{P}} \; a.s.$ as
\begin{equation*}
\begin{split}
\bar{\mathcal{O}}^{i,X}_k =& \bar{\mathbb{E}} \Big[\bar{\Lambda}_k X_k {\mathcal{O}}^{i}_{k}  \Big| y_{[1,k]} \Big]\\
 =& \bar{\mathbb{E}} \Big[\bar{\lambda}_k \bar{\Lambda}_{k-1} X_k  {\mathcal{O}}^{i}_{k}  \Big| y_{[1,k]} \Big]\\
 =& \frac{B(y_k)}{\psi(y_k)} \bar{\mathbb{E}} \Big[  \bar{\Lambda}_{k-1}  X_k {\mathcal{O}}^{i}_{k}  \Big| y_{[1,k]} \Big]\\
  =& \frac{B(y_k)}{\psi(y_k)} \bar{\mathbb{E}} \Big[  \bar{\Lambda}_{k-1}  (A X_{k-1} + V_k)  {\mathcal{O}}^{i}_{k} \Big| y_{[1,k]} \Big]\\
    =& \frac{B(y_k)}{\psi(y_k)} \bar{\mathbb{E}} \Big[  \bar{\Lambda}_{k-1}  A X_{k-1} \left( {\mathcal{O}}^{i}_{k-1} + \langle X_{k-1},e_i \rangle \right) \Big| y_{[1,k]} \Big]\\
   =& \frac{B(y_k)}{\psi(y_k)} A \bar{\mathbb{E}} \Big[  \bar{\Lambda}_{k-1} \left( X_{k-1} \mathcal{O}^{i}_{k-1} + {X}^i_{k-1} e_i \right)  \Big| y_{[1,k]} \Big]
  \\
     =& \frac{B(y_k)}{\psi(y_k)} A \bar{\mathbb{E}} \Big[  \bar{\Lambda}_{k-1} \left(X_{k-1} \mathcal{O}^{i}_{k-1} + {X}^i_{k-1} e_i \right) \Big| y_{[1,k-1]} \Big]
  \\
  =& \frac{B(y_k)}{\psi(y_k)} A \left( \bar{\mathcal{O}}^{i,X}_{k-1} + \bar{X}^i_{k-1} e_i \right)
\end{split}
\end{equation*}
where in the third line we have used (\ref{eqn:lstep}), in the fourth line we have used  (\ref{eqn:xprocess}), in the fifth line we have used the definition of $\mathcal{O}^{i}_{k}$ and 
(\ref{eqn:vzero}), in the sixth line we have used that $X_{k-1} \langle X_{k-1},e_i \rangle = {X}_{k-1}^i e_i$. We then use that $y_k$ is {\it i.i.d} under $\bar{\mathbb{E}}$, and finally the definitions of the unnormalised estimates.

From \eqref{eqn:xbayes} and \eqref{eqn:obayes} we note there is a common normalisation factor. We can now write our CME as 
\begin{equation*}
\hat{\mathcal{O}}^{i,X}_k= N_k B(y_k) A (\hat{\mathcal{O}}^{i,X}_{k-1} + \hat{X}^i_{k-1} e_i).
\end{equation*}
Finally,   an inner product gives our occupation time CME
\begin{equation*}
\hat{\mathcal{O}}^{i}_k = \langle \underline{1}, \hat{\mathcal{O}}^{i,X}_k \rangle.
\end{equation*}
This completes the proof. 
\hfill \IEEEQEDopen

\subsection*{Proof of Lemma 3}
Let us define $U_{1,k} \triangleq B(y_k)AB(y_{k-1})A,...,B(y_1)A$ and $N_{1,k}(\hat{X}_0) \triangleq \left( 1' U_{1,k}\hat{X}_0\right)^{-1}.$ Note that we can write the CME as $\hat{X}_k(\hat{X}_0) = N_{1,k}(\hat{X}_0) U_{1,k} \hat{X}_0$.

For $k  > 0$ we can write our occupation time filter $\hat{\mathcal{O}}^{i,X}_k $ in terms of an initial condition $\hat{X}_0$, as follows
\begin{equation*}
\begin{split}
\hat{\mathcal{O}}^{i,X}_k(\hat{X}_0) &= N_{1,k}(\hat{X}_0) U_{1,k}   \hat{\mathcal{O}}^{i,X}_0  \\
+ \sum^k_{j=1}& N_{j,k}(\hat{X}_{j-1}(\hat{X}_0)) U_{j,k} \hat{X}_j^r( \hat{X}_0) e_i.
\end{split}
\end{equation*}

Noting the initialisation  $\hat{\mathcal{O}}^{i,X}_0= \underline{0}$,  for any $i$,  we have
\begin{equation*}
\begin{split}
\big| \hat{\mathcal{O}}^{i,X}_k&(\hat{X}_0) - \hat{\mathcal{O}}^{i,X}_k(\check{X}_0) \big| \\
\leq & \sum^k_{j=1}  \mathcal{C}_{j,k}(\hat{X}_0,\check{X}_0) 
\end{split}
\end{equation*}
where $  \mathcal{C}_{j,k}(\hat{X}_0,\check{X}_0) \triangleq \big| N_{j,k}(\hat{X}_{j-1}(\hat{X}_0)) U_{j,k} \hat{X}^i_j( \hat{X}_0) e_i $ $- N_{j,k}(\hat{X}_{j-1}(\check{X}_0)) U_{j,k} \hat{X}^i_j( \check{X}_0) e_i \big|$.
Under our lemma assumption, for $k>j$, we can now write
\begin{equation*}
 \mathcal{C}_{j,k}(\hat{X}_0,\check{X}_0)  \leq \beta \left( \left| \hat{X}^i_j( \hat{X}_0) e_i - \hat{X}^i_{j}( \check{X}_0)) e_i \right|, k-j \right).
\end{equation*}

Given that $\beta \left( \left| \hat{X}^i_j( \hat{X}_0) e_i - \hat{X}^i_{j}( \check{X}_0)) e_i \right|, k-j \right) < \beta \left( 1, k-j \right)$,  we note that for any given $\delta$ there  is a $H$ such that for sufficiently large $k$ we can write
 \begin{equation*}
\mathcal{C}_{j,k}(\hat{X}_0,\check{X}_0)  < 
 \begin{cases}
 \delta & \text{for } k-j \geq H\\
 1 & \text{for } k-j < H.\\
 \end{cases}
 \end{equation*}

Finally, because there are $H$ terms that are bounded by $1$ and $k-H$ terms bounded by $\delta$, we can bound our error as
\begin{equation*}
\left| \hat{\mathcal{O}}^{r,X}_k(\hat{X}_0) - \hat{\mathcal{O}}^{r,X}_k(\check{X}_0) \right| \leq \delta(k-H) +H.
\end{equation*}
Dividing by $k$ and using that $\delta <1$ gives
\begin{equation*}
\mathcal{R}^E_k(\hat{X}_0,\check{X}_0) \leq \delta + \frac{H}{k},
\end{equation*}
and this completes the proof. \hfill \IEEEQEDopen


\bibliographystyle{IEEEtran}
\bibliography{IEEEabrv,ref}


\end{document}